\def\@linkcolor{blue}
  \def\@anchorcolor{red}
  \def\@citecolor{red}
  \def\@filecolor{red}
  \def\@urlcolor{red}
  \def\@menucolor{red}
  \def\@pagecolor{red}
  \edef\x{%
    \edef\noexpand\x{%
      \endgroup
      \noexpand\toks@{%
        \catcode 96=\noexpand\the\catcode`\noexpand\`\relax
        \catcode 61=\noexpand\the\catcode`\noexpand\=\relax
      }%
    }%
    \noexpand\x
  }%
\newtheorem{Theorem}{Theorem}
\newtheorem{Lemma}{Lemma}
\newtheorem{Problem}{Problem}
\newtheorem{Remark}{Remark}
\newtheorem{Example}{Example}
\newtheorem{Definition}{Definition}
\DeclareMathOperator{\R}{\mathbb R}
\DeclareMathOperator*{\argmin}{arg\,min}
\newcommand{\bequ}{\begin{eqnarray}}
\newcommand{\eequ}{\end{eqnarray}}
\newcommand{\bb}{\boldsymbol}
\def\BibTeX{{\rm B\kern-.05em{\sc i\kern-.025em b}\kern-.08em
    T\kern-.1667em\lower.7ex\hbox{E}\kern-.125emX}}
\begin{document}

\title{\LARGE \bf Safety Under Uncertainty: Tight Bounds with Risk-Aware Control Barrier Functions}

\author{Mitchell Black$^1$ \and Georgios Fainekos$^2$ \and Bardh Hoxha$^2$ \and Danil Prokhorov$^2$ \and Dimitra Panagou$^3$
\thanks{$^1$Dept. of Aerospace Engineering,  Univ. of Michigan, 1320 Beal Ave, Ann Arbor, MI 48109, USA; \texttt{\{mblackjr\}@umich.edu}.}
\thanks{$^2$Toyota North America Research \& Development, 1555 Woodridge Ave, Ann Arbor, MI 48105, USA; \texttt{\{georgios.fainekos, bardh.hoxha, danil.prokhorov\}@toyota.com}.}
\thanks{$^3$Dept. of Robotics and Dept. of Aerospace Engineering,  Univ. of Michigan, Ann Arbor, MI 48109, USA; \texttt{\{dpanagou\}@umich.edu}.}
}
\maketitle  


\begin{abstract}\label{sec: abstract}
We propose a novel class of risk-aware control barrier functions (RA-CBFs) for the control of stochastic safety-critical systems. Leveraging a result from the stochastic level-crossing literature, we deviate from the martingale theory that is currently used in stochastic CBF techniques and prove that a RA-CBF based control synthesis confers a tighter upper bound on the probability of the system becoming unsafe within a finite time interval than existing approaches. We highlight the advantages of our proposed approach over the state-of-the-art via a comparative study on an mobile-robot example, and further demonstrate its viability on an autonomous vehicle highway merging problem in dense traffic. 
\end{abstract}


\section{Introduction}\label{sec.intro}

A safe autonomous system is the gate to a fully autonomous system. Since the arrival of control barrier functions (CBFs) as a tool for safe control design and system verification \cite{Ames2017CBFs, Chen2021Guaranteed,Cortez2021,Garg2021Robust,Yaghoubi2021RiskBounded}, the field of safety-critical systems has drawn nearer to passing through this gate. Intuitively, a valid CBF constitutes a certificate that all system trajectories beginning within the set of safe states shall remain safe for all future time. For a class of control-affine dynamical systems, it is popular to included CBFs as linear constraints in a quadratic program (QP) based control law. And while the potential of this set-theoretic approach to render, preserve, and/or verify safety in a system has been demonstrated in applications like autonomous driving \cite{Son2019Nonaffine,Black2022ffcbf,Yaghoubi2021RiskBounded}, quadrotor control \cite{Wang2018SafeLearning,Black2022Fixed}, and multi-agent systems \cite{Chen2021Guaranteed, Cheng2020SafeMultiagent}, its guarantees of safety may be lost in the absence of a complete system model.

In the deterministic setting, various works have addressed this problem by modifying standard CBF conditions. For example, robust-CBF approaches were proposed for safe control design under bounded perturbations to the system dynamics \cite{Jankovic2018Robust, Garg2021Robust,Yaghoubi2020Training} or measurement error \cite{Cosner2021Measurement,Dean2021Guaranteeing}, though the worst-case assumptions beget conservatism. Under linearly parametric model uncertainty, adaptive-CBF approaches have been used to both learn \cite{Black2022Fixed} and compensate for the effect of \cite{Taylor2019aCBF} unknown parameters in the system dynamics. For more general nonlinear uncertainty, Gaussian processes have been used to learn non-parametric, probabilistic models of the system \cite{Jagtap2020Control,Cheng2020SafeMultiagent,Castaneda2021Pointwise}, residual terms in the CBF \cite{Khan2021Safety}, and barrier functions directly \cite{Wang2018SafeLearning}, among others. Additional works have adopted chance-constrained CBF conditions for probabilistic models \cite{Khojasteh2020Probabilistic,Lyu2021Probabilistic,luo2020multi}, but do not consider safety over a time interval.

In many practical applications, the system behavior instead may be modeled by a class of stochastic differential equations (SDEs). Beginning with \cite{Prajna2007Stochastic} and the stochastic barrier certificate, CBF development in the stochastic setting has leaned heavily on martingale theory for both discrete- and continuous-time stochastic processes. Stochastic CBFs (S-CBFs), introduced in \cite{Santoyo2021FiniteStochastic} and adapted for risk-bounded control in \cite{Yaghoubi2021RiskBounded,Yaghoubi2021RiskKalman}, leverage martingales to bound the probability that a system becomes unsafe over a finite time interval. While useful in theory, in practice the probability of safety is severely limited by the initial condition. In \cite{Clark2021Stochastic} this problem is addressed via reciprocal and zeroing CBFs for stochastic systems, with claims of safety with probability one, though the required level of conservatism is unclear.

In this paper, we deviate from martingale theory and make the following contributions:
\begin{itemize}
    \item We introduce a new class of risk-aware control barrier functions (RA-CBFs) that uses a generator\footnote{The (infinitesimal) generator of a stochastic process is analogous to the Lie derivative for deterministic systems.} condition derived from the stochastic level-crossing literature to obtain an upper bound on the probability that the system becomes unsafe over a finite time interval.
    \item We derive conditions under which our RA-CBF controller guarantees a smaller upper bound on the risk of the system becoming unsafe than existing S-CBF methods, and further show via a 100,000 trial numerical study that our controller results in less conservative behavior despite this stronger guaranteed risk protection.
    \item We consider an autonomous vehicle highway merging problem and demonstrate the efficacy of our proposed RA-CBF based controller in successfully merging amongst dense traffic under a required safety probability of $99\%$.
\end{itemize}

\section{Preliminaries and Problem Formulation}\label{sec.prelims}
The uniform distribution supported by $a$ and $b$ is $U[a,b]$. A bolded $\bb{x}_t$ denotes a vector stochastic process at time $t$.
The Gauss error function is $\textrm{erf}(z) = \frac{2}{\sqrt{\pi}}\int_0^ze^{-t^2}dt$, and $\textrm{erf}^{-1}(\cdot)$ is its inverse. The trace of a matrix $\bb{M} \in \R^{n \times n}$ is $\textrm{Tr}(\bb{M})$. The Lie derivative of a function $\phi:\mathbb R^n\rightarrow \mathbb R$ along a vector field $f:\mathbb R^n\rightarrow\mathbb R^n$ at a point $x\in \mathbb R^n$ is $L_f\phi(x) \triangleq \frac{\partial \phi}{\partial x} f(x)$.

\subsection{Preliminaries}
We consider the following class of nonlinear, control-affine, stochastic differential equations (SDE),
\begin{equation}\label{eq.stochastic_system}
    d\bb{x}_t = \big(f(\bb{x}_t) + g(\bb{x}_t)\bb{u}_t\big)dt + \sigma(\bb{x}_t)d\bb{w}_t, 
\end{equation}
where $\bb{x} \in \mathcal{X} \subseteq \R^n$ denotes the state, $\bb{u} \in \mathcal{U} \subseteq \R^m$ the control input, and $\bb{w} \in \R^q$ a standard $q$-dimensional Wiener process (i.e., Brownian motion) defined over the complete probability space $(\Omega, \mathcal{F}, P)$ for sample space $\Omega$, $\sigma$-algebra $\mathcal{F}$ over $\Omega$, and probability measure $P: \mathcal{F} \rightarrow [0,1]$. We consider a class of memoryless, state-feedback controllers such that the control signal is $\bb{u}_t = k(\bb{x}_t)$,
with $f: \mathcal{X} \rightarrow \R^n$, $g: \mathcal{X} \rightarrow \R^{n \times m}$, and $k: \mathcal{X} \rightarrow \mathcal{U}$ known, locally Lipschitz, and bounded on $\mathcal{X}$, which is assumed to be bounded. We consider that $\sigma: \R^n \rightarrow \R^{n \times q}$ from \eqref{eq.stochastic_system} also satisfies these regularity conditions, and thus assume that for all $\bb{u} \in \mathcal{U}$ and $\bb{x}_0 \in \mathcal{X}_0 \subset \R^n$ the process $\{\bb{x}_t: t \in [0, \infty)\}$ is a strong solution to \eqref{eq.stochastic_system} (see \cite[Ch. 5, Def. 2.1]{Karatzas1998Brownian}).
For strong solutions, the generator 
is defined as follows.
\begin{Definition}\hspace{-0.2pt}\cite[Def. 7.3.1]{Oksendal2003Stochastic}\label{def.generator}
    The (infinitesimal) generator $\mathcal{A}$ of $\bb{x}_t$ is defined by
    \begin{equation}\label{eq.generator}
        \mathcal{A}\phi(\bb{y}) = \lim_{t \downarrow 0}\frac{\mathbb{E}\left[\phi(\bb{x}_t) \; | \; \bb{x}_0 = \bb{y}\right] - \phi(\bb{y})}{t},
    \end{equation}
    where $\phi: \R^n \rightarrow \R$ belongs to  $\mathcal{D}_\mathcal{A}$, the set of all functions such that the limit exists for all $\bb{x} \in \R^n$.
\end{Definition}
The generator is the stochastic analog to the Lie derivative for deterministic systems in that it characterizes the derivative of $\phi$ over the trajectories of \eqref{eq.stochastic_system} in expectation. By \cite[Thm. 7.3.3]{Oksendal2003Stochastic}, for a twice continuously differentiable function $\phi$ with compact support, i.e., $\phi \in \mathcal{C}_0^2(\R^n) \subset \mathcal{D}_\mathcal{A}$, the generator $\mathcal{A}$ of $\bb{x}_t$ is described by
\begin{equation}
    \mathcal{A}\phi(\bb{x}) = 
    L_f\phi(\bb{x}) + L_g\phi(\bb{x})k(\bb{x}) +
    \frac{1}{2}\textrm{Tr}\left(\sigma(\bb{x})^T\frac{\partial^2 \phi}{\partial \bb{x}^2}\sigma(\bb{x})\right), \nonumber
\end{equation}
which we denote $\Gamma_\phi(\bb{x},\bb{u}) \coloneqq \mathcal{A}\phi(\bb{x})$ by using $\bb{u} = k(\bb{x})$.

Consider the set $S$ defined by a twice continuously differentiable, positive semi-definite function $B: \R^n \rightarrow \R$:
\begin{equation}\label{eq.safe_set}
    S = \{\bb{x} \in \R^n: \; 0 \leq B(\bb{x}) < 1\},
\end{equation}
and assume that is also known that for some $\gamma \in [0, 1]$,
\begin{equation}\label{eq.B0_leq_gamma}
    B(\bb{x}) \leq \gamma, \; \forall \bb{x} \in \mathcal{X}_0.
\end{equation}
In the deterministic setting, the set $S$ is said to be \textit{forward-invariant} if $\bb{x}_0 \in S \implies \bb{x}_t \in S$, $\forall t \geq 0$. In this paper, we assume that $S$ denotes the set of safe states for \eqref{eq.stochastic_system} and therefore use the notions of forward invariance and safety interchangeably. In the stochastic setting, however, there may be failure cases in which $\bb{x}_t$ exits $S$, i.e., the system becomes unsafe. We therefore consider the stopped process, $\Tilde{\bb{x}}_t$, and probabilistic forward invariance, adapted from \cite{Kofman2012Probabilistic}.
\begin{Definition}\hspace{-0.3pt}\cite{Kushner1967Stochastic}
    Suppose that $\tau>0$ is the first time of exit of $\bb{x}_t$ from the open set $S$. The stopped process $\Tilde{\bb{x}}_t$ is
    \begin{equation}
        \Tilde{\bb{x}}_t = \begin{cases}
        \bb{x}_t; & t < \tau, \\
        \bb{x}_\tau; & t \geq \tau.
        \end{cases}\nonumber
    \end{equation} 
\end{Definition}
\begin{Definition}
    Let $0 < p \leq 1$, and consider the stopped process over an interval of length $T>0$, i.e., $\{\Tilde{\bb{x}}_t: t \in [0,T]\}$, w.r.t. the set $S$ defined by \eqref{eq.safe_set}. The set $S \subset \mathcal{X} \subseteq \R^n$ is a \textbf{probabilistic forward-invariant set with probability p} for system \eqref{eq.stochastic_system} over the interval $[0,T]$ if $P\left(\Tilde{\bb{x}}_t \in S, \forall t \in [0,T]\right) \geq p$.
\end{Definition}
Thus, $S$ is \textbf{safe} with probability $p$ over the interval $[0,T]$ if it is a probabilistic forward-invariant set with probability $p$ over $[0,T]$. Alternatively, the probability $\rho$ of the system becoming unsafe over $[0,T]$, i.e., $\rho \coloneqq P(\exists t \in [0,T]: \Tilde{\bb{x}}_t \notin S)$, is bounded by $\rho \leq 1 - p$. We refer to $\rho$ in the remainder as the "system risk". One approach to bounding the system risk of \eqref{eq.stochastic_system} is to use S-CBFs in the control design \cite{Santoyo2021FiniteStochastic,Yaghoubi2021RiskBounded}. 
\begin{Definition}\label{def.stochastic_cbf}
    Consider a set $S \subset \R^n$ defined by \eqref{eq.safe_set} for a twice continuously differentiable, positive semi-definite function $B$ satisfying \eqref{eq.B0_leq_gamma}. The function $B$ is a \textbf{stochastic control barrier function} (S-CBF) defined on the set $S$ if there exist $\alpha,\beta \geq 0$ such that for the system \eqref{eq.stochastic_system} the generator $\Gamma_B(\bb{x},\bb{u})$ satisfies the following condition, for all $\bb{x} \in S$,
    \begin{equation}
        \begin{aligned}\label{eq.scbf_condition}
            \inf_{\bb{u} \in \mathcal{U}}\Gamma_B(\bb{x},\bb{u}) \leq -\alpha B(&\bb{x}) + \beta.
        \end{aligned}
    \end{equation}
\end{Definition}
A valid S-CBF guarantees that the system risk is bounded from above, as shown in the following \cite[Prop. 1]{Santoyo2021FiniteStochastic}.
\begin{Theorem}\label{thm.stochastic_cbf_bounds}
    Consider a stochastic system of the form \eqref{eq.stochastic_system}, a set of safe states $S$ implicitly defined by a function $B$ as in \eqref{eq.safe_set}, and the interval $[0, T]$ for $T>0$. Let the probability that $\{\Tilde{\bb{x}}_t: t \in [0,T]\}$ exits $S$ be denoted $\rho_{S\text{-}CBF} \coloneqq P(\exists t\in [0,T]: \Tilde{\bb{x}}_t \notin S \; | \; \Tilde{\bb{x}}(0) \in \mathcal{X}_0)$. If $B$ is a stochastic control barrier function for \eqref{eq.stochastic_system} over the set $S$, then
    \begin{equation}\label{eq.s_cbf_probability_bounds}
        \rho_{S\text{-}CBF} \leq \begin{cases}
        1 - \left(1 - \gamma\right)e^{-\beta T}; & \alpha > 0 \; \textrm{and} \; \alpha \geq \beta, \\
        \left(\gamma + (e^{\beta T} - 1)\frac{\beta}{\alpha}\right)e^{-\beta T}; & \alpha > 0 \; \textrm{and} \; \alpha < \beta, \\
        \gamma + \beta T; & \alpha=0.
        \end{cases}
    \end{equation}
\end{Theorem}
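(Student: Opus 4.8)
The plan is to bound the exit probability $\rho_{S\text{-}CBF}=P(\tau\le T)$, where $\tau>0$ denotes the first exit time of $\bb{x}_t$ from $S$, by building from $B(\bb{x}_t)$ an exponentially weighted supermartingale adapted to each of the three $(\alpha,\beta)$ regimes and then invoking Doob's optional stopping theorem at the bounded time $T$. The first step is a pathwise reduction: since \eqref{eq.stochastic_system} has continuous sample paths, $B$ is continuous, and $S^c=\{\bb{x}:B(\bb{x})\ge 1\}$ is closed because $B\ge 0$ everywhere, we have $B(\bb{x}_\tau)=1$ on $\{\tau<\infty\}$. Hence $\{\tau\le T\}=\{\exists\,t\in[0,T]:\Tilde{\bb{x}}_t\notin S\}$, so it suffices to upper-bound $p:=P(\tau\le T)$, noting that the stopped process $\Tilde{\bb{x}}_t$ coincides with $\bb{x}_{t\wedge\tau}$.

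Next I would turn the S-CBF inequality \eqref{eq.scbf_condition} --- recalling $\Gamma_B(\bb{x},\bb{u})=\mathcal A B(\bb{x})$ with $\bb{u}=k(\bb{x})$ --- into a \emph{linear} drift bound $\mathcal A B(\bb{x})\le -rB(\bb{x})+c$ valid for all $\bb{x}\in S$, where $(r,c)$ is chosen per case by exploiting $0\le B(\bb{x})<1$ on $S$: take $(r,c)=(0,\beta)$ when $\alpha=0$; $(r,c)=(\beta,\beta)$ when $\alpha\ge\beta>0$, since there $-\alpha B(\bb{x})+\beta\le-\beta B(\bb{x})+\beta$; and $(r,c)=(\beta,\beta^2/\alpha)$ when $0<\alpha<\beta$, since there $B(\bb{x})<1<\beta/\alpha$ yields $-\alpha B(\bb{x})+\beta\le-\beta B(\bb{x})+\beta^2/\alpha$. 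In each case $c\ge r$, which is exactly what keeps the bound valid on the frozen portion of the path ($B\equiv 1$ after $\tau$), and using rate $r=\beta$ rather than $\alpha$ is what produces the closed forms in \eqref{eq.s_cbf_probability_bounds}.

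For $r>0$, I would then apply It\^o's formula (see, e.g., \cite{Oksendal2003Stochastic}) to the stopped process $\bb{x}_{t\wedge\tau}$ and the map $(\bb{x},t)\mapsto e^{rt}B(\bb{x})-\tfrac{c}{r}(e^{rt}-1)$. Its drift works out to $e^{r(t\wedge\tau)}\mathbf 1_{\{t<\tau\}}\big(rB(\bb{x}_t)+\mathcal A B(\bb{x}_t)-c\big)$, which is $\le 0$ by the previous step, while the stochastic-integral term is a genuine martingale on $[0,T]$ because $B$, $\partial B/\partial\bb{x}$ and $\sigma$ are bounded on the compact set $\overline{\mathcal X}$. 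Thus $M_t:=e^{r(t\wedge\tau)}B(\bb{x}_{t\wedge\tau})-\tfrac{c}{r}\big(e^{r(t\wedge\tau)}-1\big)$ is a true supermartingale with $M_0=B(\bb{x}_0)\le\gamma$ by \eqref{eq.B0_leq_gamma}, so optional stopping at $T$ gives $\mathbb E[M_T]\le\gamma$; when $r=0$ one uses instead the supermartingale $M_t=B(\bb{x}_{t\wedge\tau})-\beta(t\wedge\tau)$.

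Finally I would evaluate $M_T$ pathwise and solve for $p$. On $\{\tau\le T\}$, $B(\bb{x}_\tau)=1$ forces $M_T=e^{r\tau}(1-c/r)+c/r$, which is exactly $1$ when $c=r$ and is bounded below by $e^{rT}(1-c/r)+c/r$ otherwise; on $\{\tau>T\}$, $B(\bb{x}_T)\ge 0$ gives $M_T\ge-\tfrac{c}{r}(e^{rT}-1)$. Inserting these into $\mathbb E[M_T]\le\gamma$ produces a linear inequality in $p$ whose solution is exactly the corresponding branch of \eqref{eq.s_cbf_probability_bounds}; for $r=0$ one instead uses $\mathbb E[B(\bb{x}_{T\wedge\tau})]\le\gamma+\beta\,\mathbb E[T\wedge\tau]\le\gamma+\beta T$ together with $\mathbb E[B(\bb{x}_{T\wedge\tau})]\ge p$. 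I expect the main obstacle to lie in the rigor of the third step --- arguing that the local martingale is a genuine martingale so that optional stopping is legitimate, verifying $\tau$ is a stopping time, and correctly accounting for the dynamics being switched off at $\tau$ --- but this is routine given the standing boundedness and $C^2$ hypotheses; it is the case split of the second step, not any single computation, that is responsible for the three distinct closed-form bounds.
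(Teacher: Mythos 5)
Your proposal is correct. Note that the paper does not prove this statement itself --- it imports it verbatim as \cite[Prop.~1]{Santoyo2021FiniteStochastic} --- so the relevant comparison is with the argument in that reference, which (following Kushner's classical exit-probability bounds) is exactly the exponentially weighted supermartingale construction you describe. Your case split is the right one and the algebra checks out: with $r=\beta$ and $c=\beta$ (resp.\ $c=\beta^2/\alpha$) the inequality $\mathbb{E}[M_T]\le\gamma$ combined with $M_T\ge 1$ on $\{\tau\le T\}$ and $M_T\ge -\tfrac{c}{r}(e^{\beta T}-1)$ on $\{\tau>T\}$ yields $p\le 1-(1-\gamma)e^{-\beta T}$, and in the $\alpha<\beta$ case the two lower bounds combine to $\gamma\ge pe^{\beta T}+\tfrac{\beta}{\alpha}(1-e^{\beta T})$, which rearranges to the second branch; the $\alpha=0$ branch follows from $\mathbb{E}[B(\bb{x}_{T\wedge\tau})]\ge p$ as you say. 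The only point worth flagging is cosmetic: the S-CBF condition \eqref{eq.scbf_condition} is stated as an infimum over $\mathcal{U}$, so strictly speaking the drift bound $\mathcal{A}B(\bb{x})\le-\alpha B(\bb{x})+\beta$ requires the closed-loop $k(\bb{x})$ to realize (or approach) that infimum; this implicit assumption is shared by the cited source and does not affect the validity of your argument.
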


\begin{Remark}\label{rmk.s_cbf_bounds}
    A S-CBF controller can certify that at best a fraction of $1-\gamma$ of the trajectories will be safe over a time interval for any choice of $\alpha, \beta, T \geq 0$. Note that, due to the martingale origins of S-CBFs, the strength of the process noise ($\sigma(\bb{x})$) in \eqref{eq.stochastic_system} does not appear in \eqref{eq.s_cbf_probability_bounds}.
    This motivates the problem formalized in Section \ref{subsubsec.problem}.    
\end{Remark}

For control design, it has become popular to synthesize a variety of CBFs (e.g. S-CBFs \cite{Yaghoubi2021RiskBounded}, chance-constrained CBFs \cite{Khojasteh2020Probabilistic, Lyu2021Probabilistic},
robust CBFs \cite{Garg2021Robust}, etc.) with a nominal controller via quadratic program (QP) based control laws of the form
\begin{subequations}\label{eq.cbf_qp_controller}
\begin{align}
    \bb{u}^* = \argmin_{\bb{u} \in \mathcal{U}} &\frac{1}{2}\|\bb{u}-\bb{u}_0\|^2 + \frac{1}{2}w\delta^2 \label{subeq.cbf_qp_J}\\
    \nonumber &\textrm{s.t.} \\
    A\bb{u} + b &+ c\delta \leq 0, \label{subeq.cbf_constraint}
\end{align}
\end{subequations}
where $\bb{u}_0$ is the nominal control law, $\delta$ is a slack variable with weight $w \geq 0$, and \eqref{subeq.cbf_constraint} represents a generic CBF constraint, with $b,c \in \R$, $A \in \R^{1 \times m}$. In the remainder, we use \eqref{eq.cbf_qp_controller} to compare emergent behaviors of systems under S-CBFs and our proposed risk-aware CBF.


\subsection{Problem Formulation}\label{subsubsec.problem}
Based on Remark \ref{rmk.s_cbf_bounds}, we hypothesize that S-CBFs may introduce unnecessary conservatism into risk-aware control design. We use an illustrative example to show that this is indeed true, and thereby motivate the problem.

\begin{Example}\label{ex.single_integrator}
Consider a mobile robot seeking to achieve the following objective: visit a circular region of radius $R_g > 0$ centered on $s_g = [x_g \; y_g]^T$, defined with respect to the origin $s_0$ of an inertial frame $\mathcal{I}$, while remaining inside a circular region of radius $R_c$ centered on $s_0$. The goal specification may be thought of as visiting a point of interest, while the constraint may model e.g. limited communication range. We choose $s_g = [2 \; 2]^T$, $R_c=1$, $R_g=0.25$ such that the goal and safe sets do not intersect. 
Assume that the robot may be modeled as a $2$D stochastic single-integrator,
\begin{equation}\label{eq.2d_single_integrator_model}
    d\bb{z}_t = \begin{bmatrix}1 & 0 \\ 0 & 1 \end{bmatrix}\begin{bmatrix}v_x \\ v_y \end{bmatrix}dt + \begin{bmatrix}\sigma_x & 0 \\ 0 & \sigma_y \end{bmatrix}d\bb{w}_t,
\end{equation}
where $\bb{z} = [x \; y]^T$ denotes the robot's position (in m) with respect to $s_0$, the control $\bb{u} = [v_x \; v_y]^T$ consists of velocities along $x$ and $y$ axes (in m/s), and $\sigma_x, \sigma_y \in \R$ dictate the strength of noise introduced by the Wiener process $\bb{w} \in \R^2$.
\end{Example}
For control we use \eqref{eq.cbf_qp_controller} with a nominal input of $\bb{u}_0 =  -k[(x - x_g) \; (y - y_g)]^T$ with $k>0$. The input constraints are $|v_x, v_y|\leq v_{max} = 10$, and \eqref{subeq.cbf_constraint} is the S-CBF condition \eqref{eq.scbf_condition}, with $B(\bb{z}) = \frac{x^2 + y^2}{R_c^2}$.
An upper bound on the risk of the system becoming unsafe under the S-CBF-QP controller is then given by \eqref{eq.s_cbf_probability_bounds}. We fixed $\bb{z}_0 = [1/\sqrt{2}, \; 0]^T$ such that $B(\bb{z}_0) = 0.5 = \gamma$ and considered a time horizon of $T=1$ sec at a time-step of $\Delta t = 0.001$ sec. We then simulated the trajectories over $N=100,000$ trials with $\sigma_x, \sigma_y= 0.3v_{max}\cdot \Delta t$, i.e., a strength of $30\%$ of the maximum control input.

The results (shown in Table \ref{tab.stochastic_cbf_empirical_study}) confirmed our hypothesis: the S-CBF constraint may yield a theoretical system risk bound that significantly overestimates the actual fraction of unsafe outcomes.
Despite bounded failure rates of $0.505$ and $0.990$, the S-CBF based controller preserved safety in $100\%$ of the $100,000$ trials ($0$ failures) in both cases over the $T=1$ sec intervals. It is clear from this example that the S-CBF risk bounds may not, and certainly here do not, provide any meaningful guidelines.
\begin{table}[!ht]
    \caption{Stochastic CBF Trials $N=100,000$}\label{tab.stochastic_cbf_empirical_study}\vspace{-3mm}
    \begin{center}
    \begin{tabular}{|c|c|c|c|c|c|c|c|}
        \hline
        Theoretical $\rho$ & Measured $\rho$ & $\alpha$ & $\beta$ & $\gamma$ & $T$ \\ \hline
        0.505 & 0 & 0.1 & 0.01 & 0.50 & 1.0 \\ \hline
        0.990 & 0 & 10.0 & 4.0 & 0.50 & 1.0 \\ \hline
    \end{tabular}
    \end{center}
\end{table}
\vspace{-2mm}

As such, we seek to design a stochastic control framework that bounds the system risk over a finite time interval while bridging the gap between results derived in theory and those observed in practice. We now formally define the problem.
\begin{Problem}\label{prob.problem}
    Consider the stochastic dynamical system of the form \eqref{eq.stochastic_system} and an associated safe set $S$ defined by a twice continuously differentiable, positive semi-definite function $B$ satisfying \eqref{eq.B0_leq_gamma}. Design a feedback controller $\bb{u}_t = k(\bb{x}_t)$ such that under certain conditions $\rho \coloneqq P(\exists t \in [0, T]: \Tilde{\bb{x}}_t \notin S)< \rho_{S\text{-}CBF}$, where $\rho_{S\text{-}CBF}$ is given by \eqref{eq.s_cbf_probability_bounds}, and identify the conditions under which this relation holds.
\end{Problem}

\section{Risk-Aware Control Barrier Function}\label{sec.rb_cbf}

In this section, we propose solving Problem \ref{prob.problem} via a novel class of risk-aware control barrier functions (RA-CBFs). 
First, we require the following.
\begin{Lemma}\label{lem.wiener_crossing_probability}
    Suppose that $w: \R_{\geq 0} \rightarrow \R$ is a standard Wiener process, and $T>0$ and $a>0$ are constants. Then, the probability that $w_t < a$, for all $t \in [0,T]$ is given by
    \begin{align}
        P\left(\sup_{0 \leq t \leq T}w(t) < a \right) = \mathrm{erf}\left(\frac{a}{\sqrt{2T}}\right).
    \end{align}
\end{Lemma}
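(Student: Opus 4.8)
The plan is to invoke the reflection principle for Brownian motion, which is the natural tool for controlling the running maximum of a Wiener process. First I would recall that for the standard Wiener process $w$ and any level $a>0$, the reflection principle gives $P\left(\sup_{0 \leq t \leq T} w(t) \geq a\right) = 2\,P(w(T) \geq a)$; intuitively, on the event that the process reaches $a$ before time $T$, reflecting the path about the level $a$ after the first hitting time $\tau_a$ yields a bijection (measure-preserving by the strong Markov property and symmetry of Brownian increments) between paths ending above $a$ and paths ending below $a$, and the "exactly at $a$" event has probability zero. I would either cite this standard fact or sketch the hitting-time/strong-Markov argument in one line.

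Next I would rewrite the complementary event: since $2P(w(T)\geq a) = P(w(T) \geq a) + P(w(T) \leq -a) = P(|w(T)| \geq a)$ by symmetry of the centered Gaussian, we get
\begin{equation}
    P\left(\sup_{0 \leq t \leq T} w(t) < a\right) = 1 - P(|w(T)| \geq a) = P(|w(T)| < a). \nonumber
\end{equation}
Then I would use that $w(T) \sim \mathcal{N}(0,T)$, so this probability equals $\frac{1}{\sqrt{2\pi T}}\int_{-a}^{a} e^{-x^2/(2T)}\,dx = \frac{2}{\sqrt{2\pi T}}\int_0^a e^{-x^2/(2T)}\,dx$.

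Finally I would perform the change of variables $u = x/\sqrt{2T}$, so that $dx = \sqrt{2T}\,du$ and the upper limit becomes $a/\sqrt{2T}$, giving $\frac{2}{\sqrt{\pi}}\int_0^{a/\sqrt{2T}} e^{-u^2}\,du = \mathrm{erf}\!\left(\frac{a}{\sqrt{2T}}\right)$, which matches the definition of $\mathrm{erf}$ given in the preliminaries. The only genuine subtlety — and the step I would treat most carefully — is the justification of the reflection principle, i.e., that reflecting a Brownian path at its first passage time to $a$ produces again a Brownian motion; everything after that is an elementary Gaussian integral and a substitution. Since the paper is applications-oriented, I expect the cleanest presentation is to cite the reflection principle (or the distribution of the running maximum) from a standard reference and then carry out the short computation above.
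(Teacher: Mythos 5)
Your proposal is correct: the reflection principle gives $P\bigl(\sup_{0\le t\le T}w(t)\ge a\bigr)=2P(w(T)\ge a)=P(|w(T)|\ge a)$, and the Gaussian integral with the substitution $u=x/\sqrt{2T}$ yields exactly $\mathrm{erf}\bigl(a/\sqrt{2T}\bigr)$ as defined in the preliminaries. The paper itself offers no derivation at all --- it simply states that the result "follows directly from" Section 3 of the cited level-crossing reference --- so your argument is not a different route but rather the standard reflection-principle computation that underlies that citation; it is self-contained and sound, with the only nontrivial ingredient being the reflection principle itself, which you correctly identify as the step to either cite or justify via the strong Markov property.
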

\begin{proof}
    The proof follows directly from \cite[Section 3]{Blake1973level-crossing}.
\end{proof}


In what follows, we denote the integral of the generator of $\Tilde{\bb{x}}_t$ as
\begin{equation}\label{eq.integrated_cbf_generator}
        I_L(t) \coloneqq \int_0^t\Gamma_B\big(\Tilde{\bb{x}}_s,\bb{u}_s\big)ds,
\end{equation}
which may be included as an integrator state in an augmented system of dimension $n+1$. We now formally introduce the notion of the risk-aware control barrier function.
\begin{Definition}\label{def.risk_aware_cbf}
    Consider a set $S \subset \R^n$ defined by \eqref{eq.safe_set} for a twice continuously differentiable, positive semi-definite function $B$ satisfying \eqref{eq.B0_leq_gamma}. The function $B$ is a \textbf{risk-aware control barrier function} on the set $S$ if there exists a Lipschitz continuous function $\alpha \in \mathcal{K}_\infty$ such that for the system \eqref{eq.stochastic_system} the following holds for all $\bb{x} \in S$,
    \begin{equation}\label{eq.rb_cbf_condition}
        \inf_{\bb{u}\in \mathcal{U}}\Gamma_B(\bb{x},\bb{u}) \leq \alpha\big(h(I_L(t))\big),
    \end{equation}
     where
    \begin{equation}\label{eq.h_rb_cbf}
        h(I_L(t)) = 1 - \gamma - (\sqrt{2}\eta T)\hspace{1pt}\mathrm{erf}^{-1}(1-\rho_d) - I_L(t),
    \end{equation}
    with $I_L(t)$ given by \eqref{eq.integrated_cbf_generator}, $\rho_d \in \left[1-\mathrm{erf}\left(\frac{1 - \gamma}{\sqrt{2}\eta T}\right) ,1\right]$ a design parameter, and
    \begin{equation}\label{eq.eta}
        \eta = \sup_{\bb{x} \in S}\left\|L_\sigma B(\bb{x})\right\|.
    \end{equation}
\end{Definition}
In the following theorem, our main result, we prove that RA-CBFs bound the risk that a system of the form \eqref{eq.stochastic_system} becomes unsafe over a finite time interval.
\begin{Theorem}\label{thm.rb_cbf}
    Let $T>0$, and denote the system risk as $\rho \coloneqq P\left(\exists t\in [0,T]: \Tilde{\bb{x}}_t \notin S \; | \; \Tilde{\bb{x}}_0 \in \mathcal{X}_0\right)$. If $B$ is a risk-aware control barrier function on the set $S$, then,
    \begin{equation}\label{eq.risk_bound_result}
        \rho \leq \rho_d,
    \end{equation}
    where $\rho_d \in \left[1-\mathrm{erf}\left(\frac{1 - \gamma}{\sqrt{2}\eta T}\right) ,1\right]$ is a design parameter with $\eta$ given by \eqref{eq.eta}.
\end{Theorem}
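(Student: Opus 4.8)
The plan is to certify safety by combining It\^o's formula for $B$ along \eqref{eq.stochastic_system} with a pathwise comparison argument for the integrated generator $I_L$, and then recasting the exit event as a one-sided level-crossing event for a time-changed Brownian motion so that Lemma \ref{lem.wiener_crossing_probability} closes the estimate.

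First I would apply It\^o's formula to $t\mapsto B(\bb{x}_t)$ up to the first exit time $\tau$ of $\bb{x}_t$ from $S$ (localizing $B$ to the bounded set $\bar S$ so that the $C^2$ hypothesis suffices). The drift It\^o produces is exactly $\Gamma_B(\bb{x}_s,\bb{u}_s)$ and the diffusion term is $L_\sigma B(\bb{x}_s)\,d\bb{w}_s$, so for $t\le\tau$,
\begin{equation}
    B(\bb{x}_t) = B(\bb{x}_0) + I_L(t) + M_t,\qquad M_t\coloneqq\int_0^t L_\sigma B(\bb{x}_s)\,d\bb{w}_s, \nonumber
\end{equation}
with $I_L$ as in \eqref{eq.integrated_cbf_generator}. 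Since $B$ is positive semi-definite and $S$ is given by \eqref{eq.safe_set}, the trajectory can leave $S$ only by $B$ reaching $1$; hence $\{\exists t\in[0,T]:\Tilde{\bb{x}}_t\notin S\}=\{\tau\le T\}$, and on this event $\sup_{t<\tau}B(\bb{x}_t)=1$ by continuity.

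Second, I would bound $I_L$ deterministically along each sample path. The integrand in \eqref{eq.integrated_cbf_generator} is continuous in $s$ by the regularity hypotheses, so $I_L$ is $C^1$ with $I_L(0)=0$, and the RA-CBF inequality \eqref{eq.rb_cbf_condition}, realized by the closed-loop controller, gives $\dot I_L(t)\le\alpha\big(h(I_L(t))\big)=\alpha\big(c-I_L(t)\big)$ for $t<\tau$, where $c\coloneqq1-\gamma-(\sqrt{2}\eta T)\,\mathrm{erf}^{-1}(1-\rho_d)$; the lower bound imposed on $\rho_d$ in \eqref{eq.h_rb_cbf} is precisely what makes $c\ge0$. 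Because $\alpha\in\mathcal{K}_\infty$ is Lipschitz, $\alpha(0)=0$ and the scalar equation $\dot y=\alpha(c-y)$ admits $y\equiv c$ as its unique equilibrium solution; a standard comparison argument then yields $I_L(t)\le c$ for all $t<\tau$ (at any time $I_L$ equals $c$ its derivative is nonpositive, and uniqueness through $y=c$ forbids a crossing). With $B(\bb{x}_0)\le\gamma$ this gives, on $[0,\tau)$,
\begin{equation}
    B(\bb{x}_t)\le\gamma+c+M_t = 1-(\sqrt{2}\eta T)\,\mathrm{erf}^{-1}(1-\rho_d)+M_t. \nonumber
\end{equation}

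Third, I would convert this into the probability bound. On $\{\tau\le T\}$ the last display forces $\sup_{0\le t\le T}\Tilde{M}_t\ge a$, where $\Tilde{M}_t\coloneqq M_{t\wedge\tau}$ and $a\coloneqq(\sqrt{2}\eta T)\,\mathrm{erf}^{-1}(1-\rho_d)$, so $\rho=P(\tau\le T)\le P\big(\sup_{0\le t\le T}\Tilde{M}_t\ge a\big)$. Since $\bar S$ is compact and $\sigma,\ \partial B/\partial\bb{x}$ are continuous, $\|L_\sigma B\|\le\eta<\infty$ on $S$ as in \eqref{eq.eta}, so the quadratic variation obeys $\langle\Tilde{M}\rangle_t\le\eta^2T$ for $t\le T$; by the Dambis--Dubins--Schwarz theorem $\Tilde{M}_t=W_{\langle\Tilde{M}\rangle_t}$ for a standard Wiener process $W$, whence $\sup_{0\le t\le T}\Tilde{M}_t\le\sup_{0\le u\le\eta^2T}W_u$. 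Applying Lemma \ref{lem.wiener_crossing_probability} with horizon $\eta^2T$ and level $a$ bounds $\rho$ by $1-\mathrm{erf}\!\big(a/\sqrt{2\eta^2T}\big)$, and substituting $a$ and simplifying gives \eqref{eq.risk_bound_result}, i.e. $\rho\le\rho_d$. The main obstacle I expect is the bookkeeping in these last two steps: making the comparison argument for $I_L$ fully rigorous (pinning down that the RA-CBF inequality is in force exactly on the set where $h(I_L)\ge0$, where $\alpha\circ h$ is well defined, and that the controller attains the infimum in \eqref{eq.rb_cbf_condition} along the realized trajectory), and making the constants in the time change line up so that Lemma \ref{lem.wiener_crossing_probability} returns exactly $\rho_d$ — this final arithmetic hinges on the precise $T$- and $\eta$-scaling built into $h$ in \eqref{eq.h_rb_cbf}, and is the place where an off-by-$\sqrt{T}$ in that scaling would change the conclusion.
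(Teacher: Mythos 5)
Your overall strategy is the same as the paper's: It\^o's formula to decompose $B(\Tilde{\bb{x}}_t)=B(\Tilde{\bb{x}}_0)+I_L(t)+I_S(t)$, a forward-invariance argument on the scalar set $S_I=\{I_L: h(I_L)\ge 0\}$ to keep $I_L(t)\le 1-\gamma-(\sqrt{2}\eta T)\,\mathrm{erf}^{-1}(1-\rho_d)$ (the paper phrases this as ``$h$ is a valid CBF for $S_I$''; your explicit comparison-lemma argument with $\dot y=\alpha(c-y)$ is a cleaner version of the same step), and finally a reduction of the exit event to a one-sided Brownian level crossing handled by Lemma \ref{lem.wiener_crossing_probability}. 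The one genuinely different step is how you treat the stochastic integral: you bound its quadratic variation by $\eta^2 T$ and invoke Dambis--Dubins--Schwarz, whereas the paper replaces $I_S(t)$ by a dominating Gaussian $\bar I_S(t)\sim\mathcal{N}(0,\eta^2 t)$ and then identifies $\bar I_S(t)=\eta\sqrt{t}\,w(t)$ before dividing through and applying the lemma with horizon $T$. Your route is the more standard and more defensible one (the paper's identification $\eta\sqrt{t}\,w(t)$ has marginal variance $\eta^2 t^2$, not $\eta^2 t$, so it is not literally the process it is meant to represent).

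The place where this matters is exactly the one you flag at the end, and it is a real discrepancy rather than mere bookkeeping: DDS with horizon $\eta^2 T$ and level $a=(\sqrt{2}\eta T)\,\mathrm{erf}^{-1}(1-\rho_d)$ gives $\rho\le 1-\mathrm{erf}\bigl(a/(\eta\sqrt{2T})\bigr)=1-\mathrm{erf}\bigl(\sqrt{T}\,\mathrm{erf}^{-1}(1-\rho_d)\bigr)$, which equals $\rho_d$ only at $T=1$ and implies $\rho\le\rho_d$ only for $T\ge 1$; for $T<1$ your (correct) estimate does not close with the $\sqrt{2}\eta T$ scaling that the paper builds into $h$ in \eqref{eq.h_rb_cbf}. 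The paper's own derivation arrives at the denominator $\sqrt{2}\eta T$ precisely through the $\eta\sqrt{t}\,w(t)$ substitution, i.e., the two derivations genuinely disagree on whether the threshold should scale like $\eta T$ or $\eta\sqrt{T}$. So your proposal is not a drop-in reproduction of the paper's argument: to finish it as stated you would either need to restrict to $T\ge1$, or redefine $h$ with $\eta\sqrt{2T}$ in place of $\sqrt{2}\eta T$ (which your analysis suggests is the correct normalization). Apart from this constant, the remaining ingredients --- the localization, the identification of exit with $B$ hitting $1$, and the caveat that the closed-loop controller must actually realize the infimum in \eqref{eq.rb_cbf_condition} --- are handled at least as carefully as in the paper.
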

\begin{proof}
    Let $\tau>0$ be the first time of exit of $\bb{x}_t$ from the open set $S$. With $\{\bb{x}_t: t \in [0, \infty)\}$ a strong solution to \eqref{eq.stochastic_system}, we have via It$\hat{\mathrm{o}}$'s Formula \cite[Theorem 4.2.1]{Oksendal2003Stochastic} that $\forall t < \tau$,
    \begin{equation}
        dB(\Tilde{\bb{x}}_t) = \Gamma_B(\Tilde{\bb{x}}_t,\bb{u}_t)dt+ L_\sigma B(\Tilde{\bb{x}}_t)d\bb{w}_t, \nonumber
    \end{equation}
    which leads to the integral equation $B(\Tilde{\bb{x}}_t) = B(\Tilde{\bb{x}}_0) + I_L(t) + I_S(t)$,
    where $I_L(t)$ is a Lebesgue integral defined by \eqref{eq.integrated_cbf_generator} and $I_S(t)$ is a stochastic integral defined by
    \begin{equation}\label{eq.B_ito_integral}
        I_S(t) = \int_0^tL_\sigma B(\Tilde{\bb{x}}_s)d\bb{w}_s. 
    \end{equation}
    While \eqref{eq.integrated_cbf_generator} can be evaluated deterministically, the stochastic integral \eqref{eq.B_ito_integral} is an It$\hat{\mathrm{o}}$ integral \cite[Def. 3.1.6]{Oksendal2003Stochastic} and thus induces a distribution on $B(\Tilde{\bb{x}}_t)$ based on 
    \begin{equation}\nonumber
        I_S(t) \sim \mathcal{N}\left(0, \; \mathbb{E}\left[\left(\int_0^tL_\sigma B(\Tilde{\bb{x}}_s)d\bb{w}_s\right)^2\right]\right).
    \end{equation}
    With $\bb{w}$ the $q$-dim. standard Wiener process, it follows from the $q$-dim. It$\hat{\mathrm{o}}$ isometry (\hspace{-0.2mm}\cite[Lemma 18]{zhang2020wasserstein}  (an extension of the $1$-dim. It$\hat{\mathrm{o}}$ isometry \cite[Lemma 3.1.5]{Oksendal2003Stochastic}) that $\mathbb{E}\left[\left(\int_0^tL_\sigma B(\Tilde{\bb{x}}_s)d\bb{w}_s\right)^2\right] =\int_0^t\left\|L_\sigma B(\Tilde{\bb{x}}_s)\right\|^2ds,$
    and thus that $B(\Tilde{\bb{x}}_t) \sim \mathcal{N}(\mu_B(t), \sigma_B^2(t))$, where $\mu_B(t) = B(\Tilde{\bb{x}}_0) + I_L(t)$ and $\sigma_B^2(t) = \int_0^t\left\|L_\sigma B(\Tilde{\bb{x}}_s)\right\|^2ds$. As such, 
    \begin{align}
        \rho &= 1 - P\left(\sup_{0 \leq t \leq T}B(\Tilde{\bb{x}}_t) < 1 \; | \; B(\Tilde{\bb{x}}_0) \leq \gamma\right). \nonumber
    \end{align}
    Now, let $\bar{I}_S(t) \sim \mathcal{N}(0, \int_0^t\eta^2ds)$ induce a probability distribution on $\bar{B}(\Tilde{\bb{x}}_t)$, i.e., $\bar{B}(\Tilde{\bb{x}}_t) = B(\Tilde{\bb{x}}_0) + I_L(t) + \bar{I}_S(t)$.
    Then, since by \eqref{eq.eta} $\int_0^t\eta^2 ds \geq \sigma_B^2(t)$, for all $t \geq 0$, it follows that
    \begin{equation}\label{eq.rho_bar}
    \begin{aligned}\small
        \rho \leq \bar{\rho} \coloneqq 1 - P\left(\sup_{0 \leq t \leq T}\bar{B}(\Tilde{\bb{x}}_t) < 1 \; | \; B(\Tilde{\bb{x}}_0) \leq \gamma\right).
    \end{aligned}
    \end{equation}
    However, we observe that $\int_0^t\eta^2 ds = \eta^2t$, and thus by Gaussian linearity $\bar{I}_S(t) = \eta\sqrt{t}w(t)$, where $w(t)$ is the $1$-dimensional standard Wiener process, which implies that $\bar{B}(\bb{x}_t) = B_0 + I_L(t) + \eta\sqrt{t}w(t)$. Therefore,
    \begin{align}
        \bar\rho &= 1 - P\left(\sup_{0 \leq t \leq T}w(t) < \frac{1 - \gamma - I_L(t)}{\eta\sqrt{t}} \; | \; B_0 \leq \gamma\right), \nonumber\\
        & \leq 1 - P\left(\sup_{0 \leq t \leq T}w(t) < \frac{1 - \gamma - \bar{I}_L}{\eta\sqrt{T}} \; | \; B_0 \leq \gamma\right), \label{eq.rhobar_bound}
    \end{align}
    where $\bar{I}_L = \sup_{0 \leq t \leq T}I_L(t)$. Thus, from \eqref{eq.rho_bar}, \eqref{eq.rhobar_bound}, and Lemma \ref{lem.wiener_crossing_probability} we have
    \begin{equation}\label{eq.probability_bound}
        \rho \leq \bar\rho \leq 1 - \mathrm{erf}\left(\frac{1 - \gamma - \bar{I}_L}{\sqrt{2}\eta T}\right).
    \end{equation}
    Now, in order for  \eqref{eq.probability_bound} to be true it must hold that $\bar{I}_L \leq 1 - \gamma -(\sqrt{2}\eta T)\mathrm{erf}^{-1}(1-\bar\rho)$,
    a sufficient condition for which is that $I_L(t) \leq 1 - \gamma -(\sqrt{2}\eta T)\mathrm{erf}^{-1}(1-\bar\rho), \quad \forall t \in [0,T]$.
    We then define a set $S_I = \{I_L \in \R \; | \; h(I_L) \geq 0\}$, where $h(I_L) = 1 - \gamma - (\sqrt{2}\eta T)\textrm{erf}^{-1}(1-\bar\rho) - I_L$,
    and observe that if $h$ is a valid CBF for the set $S_I$, i.e., if there exists $\alpha \in \mathcal{K}_\infty$ such that, $\forall I_L \in S_I$ and $\forall t \in [0,T]$, \eqref{eq.rb_cbf_condition} holds then the set $S$ is probabilistically forward-invariant with probability $p=1 - \rho \geq 1-\bar\rho$. Thus, from \eqref{eq.probability_bound} it follows that since $I_L(0)=0$ by definition, $\bar\rho_0 \leq 1 -  \mathrm{erf}\left(\frac{1 - \gamma}{\sqrt{2}\eta T}\right)$ where $\bar{\rho}_0$ is $\bar\rho$ at $t=0$. Therefore, for $h(I_L)\geq 0$ it must hold that $\rho_d \in [\bar\rho_0, 1]$. This completes the proof.
\end{proof}
\begin{Remark}
    Under an RA-CBF controller, the upper bound $\rho$ on the system risk is a function only of the initial condition $\gamma$, the length of the time interval $T$, and the effect of the stochastic noise on $B$, i.e., $\eta$. The function $h$ measures how closely the controller has taken the system to the tolerable risk threshold $\rho$ via actions integrated to form $I_L$.
\end{Remark}
We now present conditions under which the bound on the system risk guaranteed by Theorem \ref{thm.rb_cbf} is strictly less than the bound guaranteed under the S-CBF control framework.

\begin{Theorem}\label{thm.tighter_risk_bound}
    Let the premises of Theorem \ref{thm.stochastic_cbf_bounds} hold, and let $\rho_d$ be as defined in Theorem \ref{thm.rb_cbf}. If $B$ is a risk-aware control barrier function, then
    \begin{equation}
        \min_{\rho_d \in \mathcal{R}} \rho_d < \rho_{S\text{-}CBF}
    \end{equation}
    whenever
    \begin{equation}\label{eq.eta_condition}
        \eta < \frac{1 - \gamma}{\sqrt{2}T\mathrm{erf}^{-1}(1-\gamma)},
    \end{equation}
    where $\eta$ is given by \eqref{eq.eta} and $\mathcal{R} = [1-\mathrm{erf}\left(\frac{1 - \gamma}{\sqrt{2}\eta T}\right),1]$.
\end{Theorem}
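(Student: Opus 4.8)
The plan is to reduce the claimed inequality to two elementary observations and one monotonicity argument. First, since the objective $\rho_d \mapsto \rho_d$ is the identity on the interval $\mathcal{R} = \left[1-\mathrm{erf}\!\left(\frac{1-\gamma}{\sqrt{2}\eta T}\right),\,1\right]$, its minimum is attained at the left endpoint, so $\min_{\rho_d\in\mathcal{R}}\rho_d = 1-\mathrm{erf}\!\left(\frac{1-\gamma}{\sqrt{2}\eta T}\right)$, and this value is admissible as a design parameter by Theorem \ref{thm.rb_cbf}. It therefore suffices to show $1-\mathrm{erf}\!\left(\frac{1-\gamma}{\sqrt{2}\eta T}\right) < \rho_{S\text{-}CBF}$.

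Second, I would show $\rho_{S\text{-}CBF}\geq\gamma$ for every admissible $(\alpha,\beta)$, i.e.\ in each branch of \eqref{eq.s_cbf_probability_bounds} (this is essentially the content of Remark \ref{rmk.s_cbf_bounds}). In the branch $\alpha>0$, $\alpha\geq\beta$, we have $e^{-\beta T}\leq 1$, so $1-(1-\gamma)e^{-\beta T}\geq\gamma$; in the branch $\alpha=0$, $\gamma+\beta T\geq\gamma$ trivially; and in the branch $\alpha>0$, $\alpha<\beta$, using $\beta/\alpha>1$ and $e^{\beta T}-1>0$ gives $\left(\gamma+(e^{\beta T}-1)\tfrac{\beta}{\alpha}\right)e^{-\beta T} > \left(\gamma+e^{\beta T}-1\right)e^{-\beta T} = 1-(1-\gamma)e^{-\beta T}\geq\gamma$. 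Hence it is enough to prove the strict inequality $1-\mathrm{erf}\!\left(\frac{1-\gamma}{\sqrt{2}\eta T}\right) < \gamma$.

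Finally, I would identify this inequality with the hypothesis \eqref{eq.eta_condition}. Assuming $\gamma\in(0,1)$ and $\eta>0$ (the degenerate boundary cases are disposed of directly: if $\eta=0$ then $\min_{\rho_d\in\mathcal{R}}\rho_d = 0 < \gamma$, and if $\gamma\in\{0,1\}$ the right-hand side of \eqref{eq.eta_condition} either vanishes or is undefined, making the hypothesis vacuous or meaningless), the claim $1-\mathrm{erf}\!\left(\frac{1-\gamma}{\sqrt{2}\eta T}\right)<\gamma$ is equivalent to $\mathrm{erf}\!\left(\frac{1-\gamma}{\sqrt{2}\eta T}\right)>1-\gamma$. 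Since $\mathrm{erf}$ is a strictly increasing bijection of $[0,\infty)$ onto $[0,1)$ with $\mathrm{erf}^{-1}(1-\gamma)\in(0,\infty)$, this holds iff $\frac{1-\gamma}{\sqrt{2}\eta T}>\mathrm{erf}^{-1}(1-\gamma)$, and dividing through by the positive quantities $\sqrt{2}\,T\,\mathrm{erf}^{-1}(1-\gamma)$ and $\eta$ yields exactly $\eta < \frac{1-\gamma}{\sqrt{2}T\,\mathrm{erf}^{-1}(1-\gamma)}$. Chaining the steps, whenever \eqref{eq.eta_condition} holds we obtain $\min_{\rho_d\in\mathcal{R}}\rho_d = 1-\mathrm{erf}\!\left(\frac{1-\gamma}{\sqrt{2}\eta T}\right) < \gamma \leq \rho_{S\text{-}CBF}$, which is the assertion. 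The only step requiring genuine care is the case analysis establishing $\rho_{S\text{-}CBF}\geq\gamma$ — in particular the $\alpha<\beta$ branch, where the factor $\beta/\alpha$ must be lower-bounded — so that is where I expect the (minor) difficulty to lie.
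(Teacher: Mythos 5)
Your proposal is correct and follows essentially the same route as the paper's proof: take the left endpoint of $\mathcal{R}$ as the minimizer, use $\rho_{S\text{-}CBF} \geq \gamma$, and rearrange $1-\mathrm{erf}\bigl(\tfrac{1-\gamma}{\sqrt{2}\eta T}\bigr) < \gamma$ into \eqref{eq.eta_condition}. The only difference is that you actually verify $\rho_{S\text{-}CBF}\geq\gamma$ branch by branch in \eqref{eq.s_cbf_probability_bounds} (correctly, including the $\alpha<\beta$ case), whereas the paper simply cites this fact from Remark \ref{rmk.s_cbf_bounds} without proof.
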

\begin{proof}
    The proof follows immediately from the observation in Remark \ref{rmk.s_cbf_bounds}, i.e., that $\rho_{S\text{-}CBF} \geq \gamma$, $\forall \alpha, \beta, T \geq 0$. Then, from Theorem \ref{thm.rb_cbf}, $\min \rho_d < \rho_{S\text{-}CBF}$ whenever $1 - \mathrm{erf}\left(\frac{1-\gamma}{\sqrt{2}\eta T}\right) < \gamma$. By rearranging terms, we recover \eqref{eq.eta_condition}.
\end{proof}
This result provides guidelines as to when a RA-CBF controller would predict lower levels of risk than a S-CBF controller, or vice versa. In the robot motion problem from Section \ref{subsubsec.problem}, with dynamics \eqref{eq.2d_single_integrator_model} and barrier function $B(\bb{z}) = \frac{x^2 + y^2}{R_c^2}$ we have $\eta \approx 0.009$. As such, $\min \rho_d \geq \rho_{S\text{-}CBF}$ over the $T=1$ sec time interval would have required either $\gamma < 1e$-$15$ given $\sigma_x,\sigma_y$ or $\sigma_x,\sigma_y \approx 50v_{max}\cdot \Delta t$ given $\gamma=0.5$, both of which are unrealistic for the problem.

When $\eta$ given by \eqref{eq.eta} is large, however, the allowable risk specifications using a RA-CBF (based on $\min \rho_d$) may not be acceptable.
In this case, it may be more useful to design the controller to remain inside a smaller sub-level set $S_\mu = \{\bb{x} \in \R^n: \; 0 \leq B(\bb{x}) < \mu\}$, or to derive a total risk of the system becoming unsafe by cascading sets $S_{\mu_1},\hdots,S_{\mu_k}$, as shown in the following result.

\begin{Theorem}\label{thm.risk_cascading_sets}
    Suppose that the premises of Theorem \ref{thm.rb_cbf} hold. Consider a sequence $\mu_0,\hdots,\mu_k$ such that $\gamma = \mu_0 < \mu_1 < \hdots < \mu_k = 1$ with sub-level sets $S_{\mu_i} = \{\bb{x} \in \R^n: \; 0 \leq B(\bb{x}) < \mu_i\} \subseteq S$, $\forall i \in \{1,\hdots,k\}$, each of which has $\eta_i$ defined by \eqref{eq.eta} over $S_{\mu_i}$. If $B$ is a RA-CBF on each set $S_{\mu_i}$, then
    $\rho \leq \rho_d$, where $\rho_d$ is a design parameter bounded by
    \begin{equation}\label{eq.tighter_cascaded_rho_bound}
        \prod_{i=1}^k\left(1 - \mathrm{erf}\left(\frac{\mu_{i} - \mu_{i-1}}{\sqrt{2}T\eta_{i}}\right)\right) \leq \rho_d \leq 1.
    \end{equation}
\end{Theorem}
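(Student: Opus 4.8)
The plan is to combine the path-continuity of $\bb{x}_t$ with the strong Markov property to split the event ``$\Tilde{\bb{x}}_t$ leaves $S=S_{\mu_k}$ within $[0,T]$'' into $k$ consecutive one-level crossings, each controlled by Theorem~\ref{thm.rb_cbf} applied to the appropriate nested pair of sub-level sets. Working with the (unstopped) strong solution $\bb{x}_t$, I would introduce the hitting times $\tau_i \coloneqq \inf\{t\ge 0 : B(\bb{x}_t)\ge \mu_i\}$, $i=1,\dots,k$. Since $B(\bb{x}_0)\le \mu_0<\mu_1<\dots<\mu_k$ and both $B$ and $t\mapsto\bb{x}_t$ are continuous, the intermediate value theorem forces $\tau_1\le\tau_2\le\dots\le\tau_k$ almost surely, with $B(\bb{x}_{\tau_i})=\mu_i$ on $\{\tau_i<\infty\}$ and $\bb{x}_t\in S_{\mu_i}$ for all $t\in[0,\tau_i)$. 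Because $\Tilde{\bb{x}}_t=\bb{x}_t$ up to the exit from $S$, the event of interest is exactly $\{\tau_k\le T\}$, so the goal reduces to bounding $P(\tau_k\le T)$.

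Next I would record the per-leg ingredient. The statement and proof of Theorem~\ref{thm.rb_cbf} go through after rescaling: in Definition~\ref{def.risk_aware_cbf} and in the proof, replace the roles of $1$ and $\gamma$ by $\mu_i$ and $\mu_{i-1}$ and $\eta$ by $\eta_i$. The hypothesis that $B$ is a RA-CBF on $S_{\mu_i}$ then gives, for every $\bb{y}$ with $B(\bb{y})\le\mu_{i-1}$, the bound $P_{\bb{y}}\big(\sup_{0\le t\le T}B(\Tilde{\bb{x}}_t)\ge\mu_i\big)\le\rho_{d,i}$ (the stopped process here being stopped at exit from $S_{\mu_i}$), where $\rho_{d,i}$ is the associated design parameter and necessarily $\rho_{d,i}\in\big[1-\mathrm{erf}\big(\tfrac{\mu_i-\mu_{i-1}}{\sqrt{2}\eta_i T}\big),\,1\big]$ for the rescaled $h$ to be nonnegative. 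One caveat I would state explicitly: this reading of ``$B$ is a RA-CBF on $S_{\mu_i}$'' presumes the integrator state $I_L$ is reset upon entry to $S_{\mu_i}$, so that the hypotheses of Theorem~\ref{thm.rb_cbf} truly apply to the restarted process.

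Finally I would run the decomposition by induction on the level index. On $\{\tau_{i-1}\le T\}$ we have $\{\tau_i\le T\}\subseteq\{\tau_i-\tau_{i-1}\le T-\tau_{i-1}\}\subseteq\{\tau_i-\tau_{i-1}\le T\}$, and since the $\mathrm{erf}$ bound is monotone in the time horizon the relaxation from $T-\tau_{i-1}$ to $T$ is harmless. By the strong Markov property at the stopping time $\tau_{i-1}$, the post-$\tau_{i-1}$ trajectory is a fresh copy of the process started from $\bb{x}_{\tau_{i-1}}$ with $B(\bb{x}_{\tau_{i-1}})=\mu_{i-1}$, and $\tau_i-\tau_{i-1}$ is its first exit time from $S_{\mu_i}$; the per-leg bound then yields $P(\tau_i\le T\mid\mathcal{F}_{\tau_{i-1}})\,\mathbf{1}\{\tau_{i-1}\le T\}\le\rho_{d,i}\,\mathbf{1}\{\tau_{i-1}\le T\}$, hence $P(\tau_i\le T)\le\rho_{d,i}\,P(\tau_{i-1}\le T)$. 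Combined with the base case $P(\tau_1\le T)\le\rho_{d,1}$, this gives $\rho=P(\tau_k\le T)\le\prod_{i=1}^k\rho_{d,i}=:\rho_d$, and since each $\rho_{d,i}\in[1-\mathrm{erf}(\tfrac{\mu_i-\mu_{i-1}}{\sqrt{2}\eta_i T}),1]$ the composite design parameter $\rho_d$ lies in the interval claimed in \eqref{eq.tighter_cascaded_rho_bound}. The main obstacle is the bookkeeping in this last step: checking that the $\tau_i$ are genuine stopping times with $B(\bb{x}_{\tau_i})=\mu_i$, that the strong Markov property is available for the state-feedback-controlled SDE, and that the integrator-reset convention makes the rescaled Theorem~\ref{thm.rb_cbf} applicable on each restarted leg; everything else is routine.
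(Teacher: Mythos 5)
Your proposal is correct and follows essentially the same route as the paper: the paper likewise decomposes the exit event into $k$ successive level crossings, applies the rescaled Theorem~\ref{thm.rb_cbf} on each $S_{\mu_i}$ with gap $\mu_i-\mu_{i-1}$ to get $\rho_{\mu_i}\leq 1-\mathrm{erf}\big(\tfrac{\mu_i-\mu_{i-1}}{\sqrt{2}T\eta_i}\big)$, and multiplies the conditional probabilities (invoking ``Bayes' rule'' where you invoke the strong Markov property at the hitting times). Your version merely makes explicit the stopping-time bookkeeping, the $T-\tau_{i-1}\leq T$ relaxation, and the integrator-reset caveat that the paper leaves implicit.
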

\begin{proof}
First, observe that by Definition \ref{def.risk_aware_cbf} the function $B$ is a RA-CBF on the set $S_{\mu_i}$ if \eqref{eq.rb_cbf_condition} holds for all $\bb{x} \in S_{\mu_i}$, where the $1-\gamma$ term in \eqref{eq.h_rb_cbf} is replaced by $\mu_i - \gamma$. Let $\rho_{\mu_i} \coloneqq P\left(\exists t\in [0,T]: \Tilde{\bb{x}}_t \notin S_{\mu_i} \; | \; \Tilde{\bb{x}}_0 \in S_{\mu_i} \setminus S_{\mu_{i-1}}\right)$. Then, with $B$ a RA-CBF on $S_{\mu_i}$, it follows from Theorem \ref{thm.rb_cbf} that 
\begin{equation}\label{eq.rho_mu_k}
    \rho_{\mu_i} \leq 1 - \mathrm{erf}\left(\frac{\mu_{i} - \mu_{i-1}}{\sqrt{2}T\eta_{i}}\right),
\end{equation}
where $\eta_{\mu_i}$ is defined by \eqref{eq.eta} over the set $S_{\mu_i}$. By \eqref{eq.B0_leq_gamma} and Bayes' rule, we then obtain that $\rho \leq \prod_{i=1}^k\rho_{\mu_i}$ and thus by \eqref{eq.rho_mu_k} we recover \eqref{eq.tighter_cascaded_rho_bound}.
\end{proof}

The bound in \eqref{eq.rho_mu_k} is particularly useful when $\eta_1 < \hdots < \eta_k$, as this is the best reduction in the conservatism in using $\eta$ over all $S$. The number of partitions $k$ is a design choice, and should be adjusted according to the desired system risk and each $\eta_i$. For control design, the RA-CBF condition \eqref{eq.rb_cbf_condition} must be satisfied on each $S_{\mu_i}$ with a choice of $\rho_{d_i} \geq \rho_{\mu_i}$.

\section{Numerical Case Studies}\label{sec.case_studies}

In this section, we highlight the efficacy of our RA-CBF controller in solving two illustrative examples: the robot problem from Section \ref{subsubsec.problem}, and a highway merging problem.

\subsection{Single-Integrator Robot}
The problem setup is identical to that in Section \ref{subsubsec.problem}, with the robot's dynamics given by \eqref{eq.2d_single_integrator_model} and its controller of the form \eqref{eq.cbf_qp_controller} with RA-CBF condition \eqref{eq.rb_cbf_condition}. 
The results were a striking departure from the S-CBF based controller. When an upper bound on system risk was set to $0.505$ to match the S-CBF trial, a fraction of $0.458$ of the trials violated the safety condition, as shown in Table \ref{tab.rb_cbf_empirical_study}.
\begin{table}[!ht]
    \centering
    \caption{Risk-Aware CBF Trials $N=100,000$}\label{tab.rb_cbf_empirical_study}
    \begin{tabular}{|c|c|c|c|c|c|c|c|}
        \hline
        Predicted $\rho$ & Measured $\rho$ & $\gamma$ & $\eta$ \\ \hline
        0.010 & 10$^{-4}$ & 0.50 & 0.006 \\ \hline
        0.505 & 0.458 & 0.50 & 0.006 \\ \hline
    \end{tabular}
\end{table}
\vspace{-2mm}
When the RA-CBF controller was used at a maximum system risk of $\rho = 0.01$, however, not only did the measured $\rho$ satisfy this bound ($1e$-$4$), but the system trajectories took more aggressive actions toward the boundary of the safe set than the S-CBF controller even when its risk level was set to $\rho_{S\text{-}CBF} = 0.505$, as shown in Figure \ref{fig.barrier_function_comparison}.
\begin{figure}[!ht]
    \centering
        \includegraphics[width=1\columnwidth,clip]{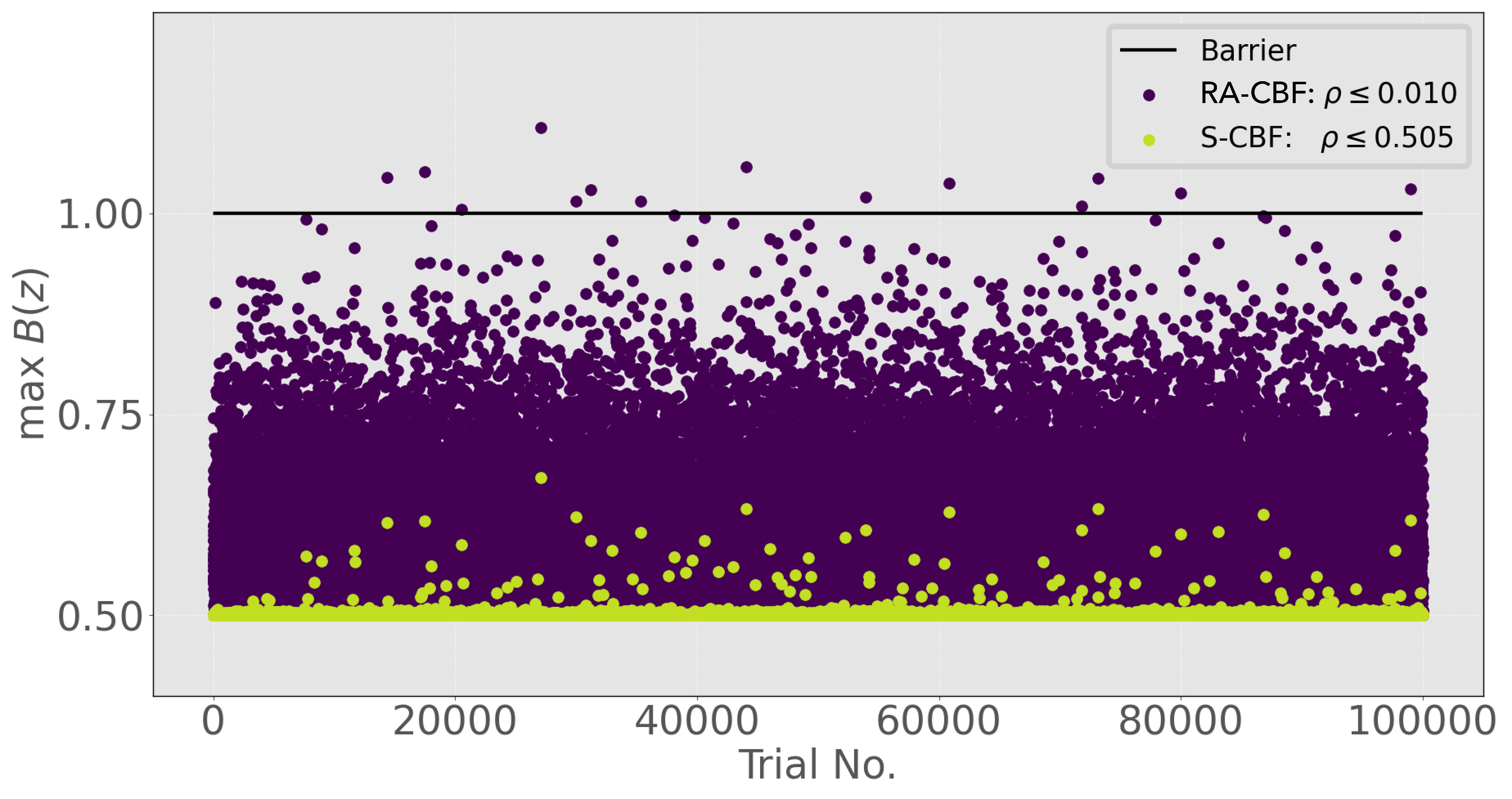}
    \caption{Maximum barrier function values ($\max_{0 \leq t \leq T}B(\bb{z}_t)$) over each trial for RA-CBF (resp. S-CBF) with system risk bounded by $\rho \leq 0.01$ (resp. $\rho_{S\text{-}CBF} \leq 0.505$).}\label{fig.barrier_function_comparison}
    \vspace{-2mm}
\end{figure}

\subsection{Highway Merging}
Let $\mathcal{I}$ be an inertial frame with an origin point $s_0$. Consider a collection of automobiles $\mathcal{A}$, a subset of which travel on a two-lane highway near an on-ramp (i.e., $\mathcal{A}_H \subset \mathcal{A}$), and the remainder of which seek to merge onto the highway via the on-ramp (i.e., $\mathcal{A}_M \subset \mathcal{A}$). Suppose that the dynamics of vehicle $i \in \mathcal{A}$ obey a stochastic bicycle model of the form \eqref{eq.stochastic_system} whose deterministic component is described by \cite[Ch. 2]{Rajamani2012VDC} (omitted due to space) and used to model cars in \cite{Black2022ffcbf}.
The stochastic term is $\sigma_i(\bb{z}_{i}) = \bb{\sigma}^T\mathbf{I}_{5 \times 5}$ with $\bb{\sigma} = [0 \; 0 \; 0 \; \sigma_{a} \; \sigma_\omega]^T$. The state is $\bb{z}_i = [x_i \; y_i \; \psi_i \; v_{i} \; \beta_i]^T$, where $x_i$ and $y_i$ denote the longitudinal and lateral positions (in m) of the center of gravity (c.g.) of vehicle $i$ with respect to $s_0$, $\psi_i$ is the orientation (in rad) of its body-fixed frame, $\mathcal{B}_i$, with respect to $\mathcal{I}$, $v_{i}$ is the velocity (in m/s) of the rear wheel with respect to $s_0$, and $\beta_i$ is the slip angle\footnote{
$\beta_i$ is related to the steering angle $\delta_i$ via $\tan{\beta_i} = \frac{l_r}{l_r+l_f}\tan{\delta_i}$.} (assume $|\beta_i|<\frac{\pi}{2}$) of the c.g. relative to $\mathcal{B}_i$ (in rad). The front and rear wheelbases are $l_f$ and $l_r$. See \cite[Figure 1]{Black2022ffcbf} for a model diagram. The control is $\bb{u}_i = [a_{i} \; \omega_i]^T$, with $a_{i}$ the linear acceleration of the rear wheel (in m/s$^2$) and $\omega_i$ the angular velocity (in rad/s) of $\beta_i$. The vector $\bb{w} \in \R^5$ is the $5$D standard Wiener process. 
\begin{figure*}[!ht]
\centering
\includegraphics[clip,width=1\textwidth]{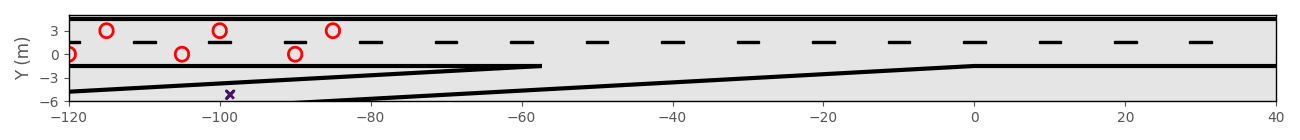}\\
\centering
\includegraphics[clip,width=1\textwidth]{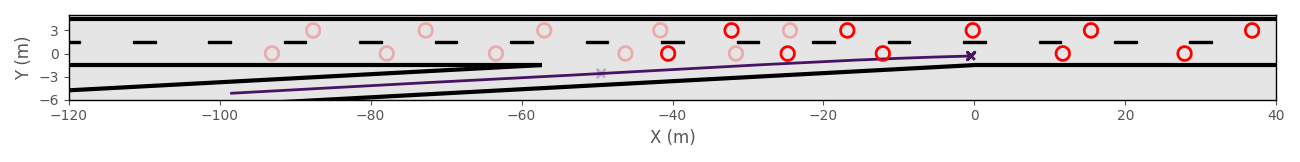}
\caption{Snapshots at $t=0.0$s (a) and $t=4.0$s (with $t=2.0s$ translucent) (b) of one trial from the empirical study on the RA-CBF-QP controller in the highway merging scenario. Traffic flows left to right, the ego vehicle is a blue X, and highway vehicles are red circles.}\label{fig.merging_example}
\vspace{-2mm}
\end{figure*}
\begin{figure}[!ht]
    \centering
        \includegraphics[width=1\columnwidth,clip]{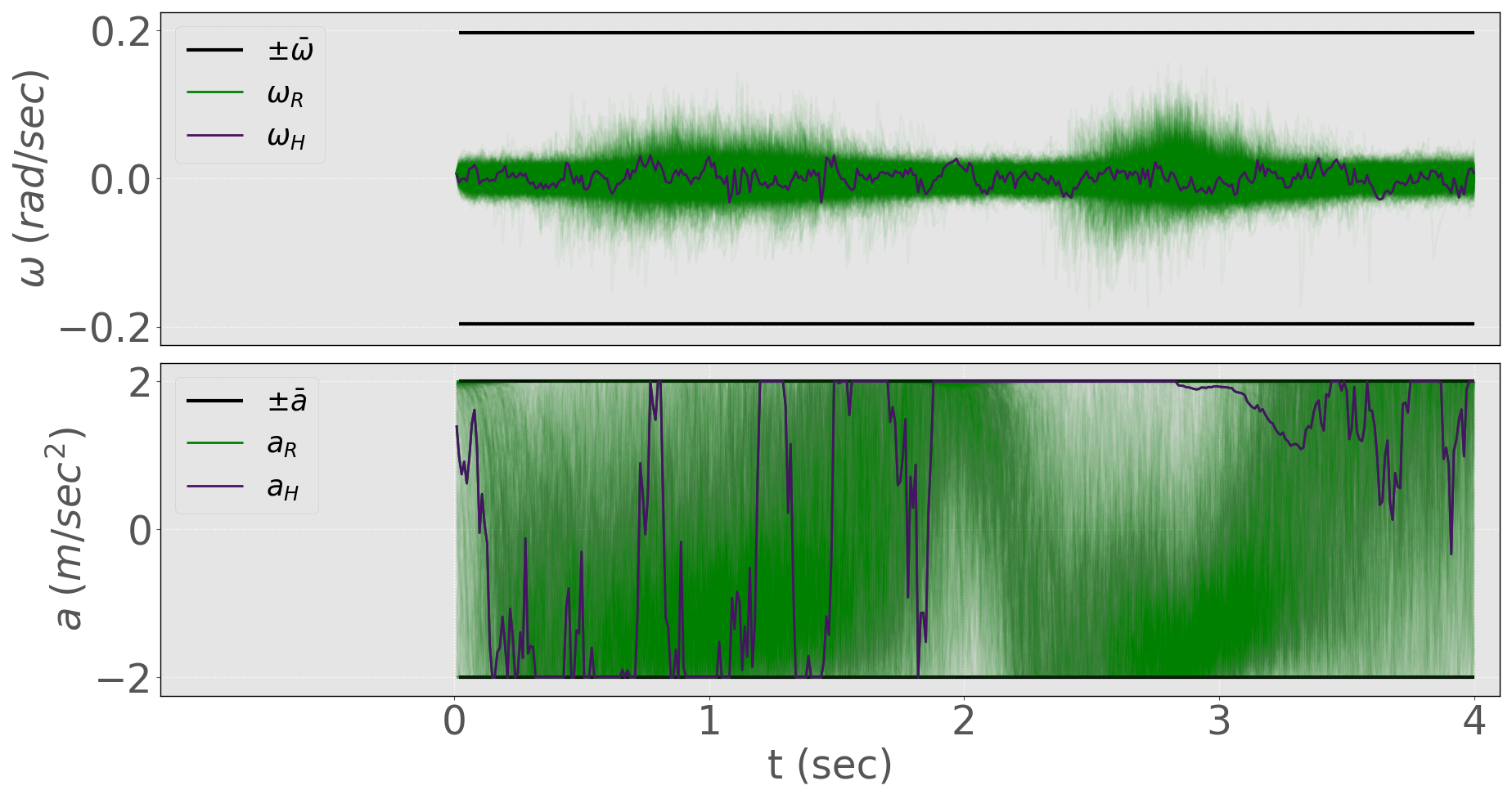}
    \caption{Ego vehicle control inputs from both the highlighted trial in Figure \ref{fig.merging_example} (subscript H) and remaining trials (subscript R).}\label{fig.control_inputs}
\end{figure}
\begin{figure}[!ht]
    \centering
        \includegraphics[width=1\columnwidth,clip]{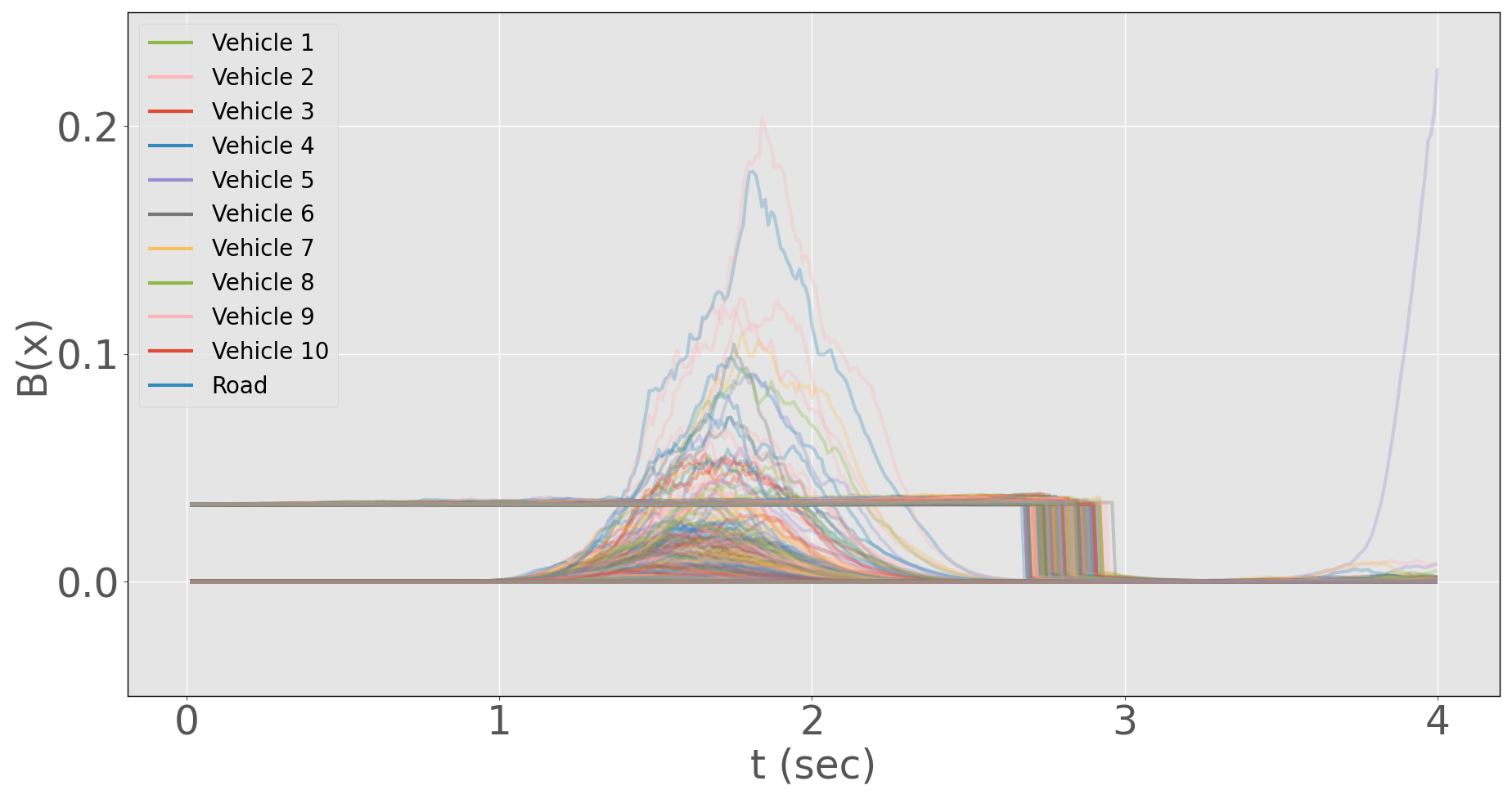}
    \caption{RA-CBF trajectories over 1000 highway merging trials.}\label{fig.cbf_trajectories}
\end{figure}

We simulated 1000 trials of a $T=4$s highway merging scenario with 11 vehicles, where $\mathcal{A}_M = \{0\}$ was the ego vehicle and $\mathcal{A}_H = \{1,\hdots,10\}$. Highway vehicles $i \in \mathcal{A}_H$ were initialized $15$m apart in the $x$ direction, and distributed evenly between lanes 1 ($y=0$) and 2 ($y=3$). Their initial velocities were distributed according to $v_{i,0} \sim U[29, 31]$. The ego vehicle was initialized $98.75$m down the on-ramp with an initial velocity $v_{e,0} \sim U[24, 26]$ such that (deterministically) under its nominal acceleration policy it would collide directly with vehicle 2. The noise components were $\sigma_{a} = A_{drag}\Delta t$ and $\sigma_\omega = \sigma_{a} \frac{\bar\omega}{\bar{a}}$, where $\bar\omega=\frac{\pi}{16}$ (rad/s) and $\bar{a}=2.0$ (m/s$^2$) define the input constraints $a_i \in [-\bar a, \bar a]$ and $\omega_i \in [-\bar \omega, \bar \omega]$, and $A_{drag}=0.1 + 5\bar{v} + 0.25\bar{v}^2$, with $\bar{v}=35$ (m/s), such that the noise at one standard deviation represents the acceleration due to aerodynamic drag \cite{Ames2017CBFs} traveling at $35$m/s. 
The ego vehicle was controlled using \eqref{eq.cbf_qp_controller} with 11 RA-CBF constraints corresponding to the occupied sub-level set $S_{\mu_i}$, where we selected $\mu_i = i / 5$ for $i \in \{1,\hdots,5\}$. The 10 ego collision avoidance constraints were encoded via $B_{ei}(\bb{z}_e, \bb{z}_i) = e^{-h_{ei}(\bb{z}_e, \bb{z}_i)}$, where $h_{ei}(\bb{z}_e,\bb{z}_i)$ is the relaxed future-focused CBF (rff-CBF) (introduced for collision avoidance in \cite{Black2022ffcbf}) with $\gamma(h_0) = 0.1h_0$. The road constraints were encoded with a rff-CBF of the form $B_r(\bb{z}_e) = e^{-h_r(\bb{z}_e)}$ for $h_r(\bb{z}_e) = h_{r,0}(\bb{z}_e, 0) + h_{r,0}(\bb{z}_e, 1)$,  where
\begin{equation}
    h_{r,0}(\bb{z}_e, \tau) = -\left(x_e\tan(\theta) + \frac{w_l}{2\cos(\theta)} - (y_e + \dot{y}_e \tau - y_l)\right)^2 \nonumber
\end{equation}
with $\theta$ the road angle with respect to the $x$-axis, $w_l$ the width of a lane, and $y_l$ the lane center. For all RA-CBFs, the corresponding $\eta_i$ values were determined numerically by simulating 1000 trials and taking $\eta_i=\max_{\bb{x} \in S_{\mu_i}}\left\|L_\sigma B(\bb{x})\right\|$ over all trials and all time. The resulting $\eta_i$ values are provided in Table \ref{tab.eta_values}. For the on-ramp, the angle of attack was $\theta=3^\circ$ ($\theta=0^\circ$ for highway lanes). The ego nominal control $\bb{u}_{0}$ was the LQR law detailed in \cite[Appendix 1]{Black2022ffcbf} based on the desired lane and velocity ($v_d = 30$m/s).
\begin{table}[!ht]
    \centering
    \caption{$\eta_i$ values derived empirically}\label{tab.eta_values}
    \begin{tabular}{|c|c|c|c|c|c|c|c|c|c|c|c|c|c|c|c|}
        \hline
        CBF & $\eta_1$ & $\eta_2$ & $\eta_3$ & $\eta_4$ & $\eta_5$ \\ \hline
        Road & 0.012 & 0.025 & 0.035 & 0.046 & 0.067 \\ \hline
        Collision & 0.018 & 0.031 & 0.049 & 0.063 & 0.076 \\ \hline
    \end{tabular}
    \vspace{-2mm}
\end{table}
For naturalistic driving behavior, we used the intelligent driver model (IDM)\cite{Treiber2000idm} to compute acceleration inputs $a_{i}$ of the highway vehicles $i \in \mathcal{A}_H$. For varying driver aggression, we randomized the vehicles' desired time gaps in the IDM according to $\tau \sim U[0.25, 0.75]$. Their steering inputs $\omega_i$ were computed using LQR based on the desired heading ($\psi_d = 0$).

Based on $\eta_i$ from Table \ref{tab.eta_values}, a simulation length of $T=4$ sec, known $\gamma$ for all $B(\bb{z}_0)$, the minimum specifiable risks associated with leaving each sub-level set $S_{\mu_i}$ for road safety and collision avoidance are provided in Table \ref{tab.risk_values}.
\begin{table}[!ht]
    \centering
    \caption{$\rho_i$ values for sub-level sets $S_{\mu_i}$}\label{tab.risk_values}
    \begin{tabular}{|c|c|c|c|c|c|c|c|c|c|c|c|c|c|c|c|}
        \hline
        CBF & $\rho_1$ & $\rho_2$ & $\rho_3$ & $\rho_4$ & $\rho_5$ \\ \hline
        Road & $8.58$e-$4$ & 0.046 & 0.153 & 0.277 & 0.456 \\ \hline
        Collision & 0.026 & 0.107 & 0.308 & 0.427 & 0.511 \\ \hline
    \end{tabular}
    \vspace{-3mm}
\end{table}
We chose the $\rho_{d,i}$ values provided in Table \ref{tab.rho_d_values} such that the probability of remaining safe with respect to the road is $0.99999$, the probability of remaining safe with respect to all 10 highway vehicles combined is $0.991$, and thus the total probability of safety is $p\geq 0.99$, which yields $\rho \leq 0.01$.
\begin{table}[!ht]
    \centering
    \caption{Specified risk bounds $\rho_{d,i}$ for sub-level sets $S_{\mu_i}$}\label{tab.rho_d_values}
    \begin{tabular}{|c|c|c|c|c|c|c|c|c|c|c|c|c|c|c|c|}
        \hline
        CBF & $\rho_{d,1}$ & $\rho_{d,2}$ & $\rho_{d,3}$ & $\rho_{d,4}$ & $\rho_{d,5}$ \\ \hline
        Road & 0.001 & 0.1 & 0.25 & 0.5 & 0.6 \\ \hline
        Collision & 0.05 & 0.15 & 0.4 & 0.5 & 0.6 \\ \hline
    \end{tabular}
    \vspace{0mm}
\end{table}


Over 1000 simulated trials, the RA-CBF based controller safely merged 1000 times, satisfying the risk bound of $\rho \leq 0.01$. Figure \ref{fig.merging_example} highlights one of these safe merges in which the ego vehicle merges behind vehicle 2, where the applied control inputs are shown in Figure \ref{fig.control_inputs}. In this study, we observed that in all 1000 trials the ego vehicle merged behind vehicle 2. In another study, in which a risk of $\rho \leq 0.12$ was specified, we observed that the ego vehicle safely merged in 914 of the 1000 trials ($p=0.914$). Interestingly, of these 914 safe trials, the ego vehicle merged behind vehicle 2 at a rate of $0.749$ and merged ahead of it the remaining $0.251$ fraction of safe trials, an indicator of the willingness of the ego vehicle in the second study to take on additional risk.

\section{Conclusion}\label{sec.conclusion}
In this paper, we proposed a new class of RA-CBFs for stochastic safety-critical systems. We introduced a new CBF condition for a class of stochastic, nonlinear, control-affine systems and proved that its use for control synthesis guarantees an upper bound on the risk that the system becomes unsafe over a finite time interval. We then derived conditions under which our RA-CBF controller results in a smaller system risk than existing methods, and conducted a direct comparative study on a mobile robot example. We demonstrated our control strategy under a $99.1\%$ safety guarantee on an autonomous vehicle highway merging problem in the midst of dense traffic. In the future, we will consider measurement noise and investigate applications of our control framework to recovery problems in the context of safe, predictive control.

\bibliographystyle{IEEEtran}
\bibliography{library}

\end{document}